\newif\ifanon
\newif\ifsubmission
  \date{}
  \author{No Author Given}
\newcommand{\bF}{{\mathbb F}}
\newcommand{\F}{{\mathbb F}}
\newcommand{\cS}{\ensuremath{\mathcal S}}
\newcommand{\cR}{\ensuremath{\mathcal R}}
\DeclareMathOperator{\PGL}{PGL}
\def\bP{\ensuremath{\mathbb{P}}}
\def\bZ{\ensuremath{\mathbb{Z}}}
\def\cP{\ensuremath{\mathcal{P}}}
\def\cL{\ensuremath{\mathcal{L}}}
\def\cH{\ensuremath{\mathcal{H}}}
\newtheorem{theo}{Theorem}
\newtheorem{prop}[theo]{Proposition}
\newtheorem{cor}[theo]{Corollary}
\newtheorem{definition}[theo]{Definition}
\newtheorem{heuristic}[theo]{Heuristic}
\def\coloneq{\mathrel{:=}}
\renewenvironment{proof}[1][\proofname]{\par
  \normalfont \topsep 6pt plus 6pt\relax
  \trivlist
  \item[\hskip\labelsep
        \itshape
    #1\@addpunct{.}]\ignorespaces
}{%
  \qed\endtrivlist\@endpefalse
}
\titlerunning{A quasi-polynomial algorithm for DLP in small
characteristic finite fields}
\title{A quasi-polynomial algorithm for discrete logarithm in finite fields of small characteristic}
\author{Razvan~Barbulescu\inst{1} \and
        Pierrick~Gaudry\inst{1} \and
        Antoine~Joux\inst{2,3} \and
        Emmanuel~Thomé\inst{1}}
\date{}
\institute{Inria, CNRS, University of Lorraine, France
    \and
    Cryptology Chair, Foundation UPMC -- LIP~6, CNRS UMR 7606, Paris, France
    \and
    CryptoExperts, Paris, France}
\begin{document}

\maketitle
\begin{abstract}
The difficulty of computing discrete logarithms in fields~$\F_{q^k}$
depends on the relative sizes of $k$ and $q$. Until recently all the
cases had a sub-exponential complexity of type $L(1/3)$, similar to the
factorization problem. In 2013, Joux designed a new algorithm with a
complexity of $L(1/4+\epsilon)$ in small characteristic. In the same
spirit, we propose in this article another heuristic algorithm that provides a
quasi-polynomial complexity when $q$ is of size at most comparable with
$k$. By quasi-polynomial, we mean a runtime of $n^{O(\log n)}$ where $n$
is the bit-size of the input.  For larger values of $q$ that stay below
the limit $L_{q^k}(1/3)$, our algorithm loses its quasi-polynomial
nature, but still surpasses the Function Field Sieve. 
\end{abstract}

\section{Introduction}

The discrete logarithm problem (DLP) was first proposed as a hard
problem in cryptography in the seminal article of Diffie and
Hellman~\cite{DiHe76}. Since then, together with factorization, it has
become one of the two major pillars of public key cryptography. As a
consequence, the problem of computing discrete logarithms has
attracted a lot of attention. From an exponential algorithm in $1976$,
the fastest DLP algorithms have been greatly improved during the past
$35$ years. A first major progress was the realization that the DLP in
finite fields can be solved in subexponential time, i.e. $L(1/2)$
where $L_N(\alpha)=\exp\left(O((\log N)^\alpha(\log\log
  N)^{1-\alpha})\right)$. The next step further reduced this to a
heuristic $L(1/3)$ running time in the full range of finite fields,
from fixed characteristic finite fields to prime
fields~\cite{Adl79,Cop84,Gor93,Adl94,JoLe06,JLVS07}.

Recently, practical and theoretical advances have been
made~\cite{Jo13faster,GGMZ13,Joux13} with an
emphasis on small to medium characteristic finite fields and composite
degree extensions. The most general and efficient
algorithm~\cite{Joux13} gives a complexity of $L(1/4+o(1))$ when the
characteristic is smaller than the square root of the extension
degree. Among the ingredients of this approach, we find the use of a
very particular representation of the finite field; the use of the
so-called {\em systematic equation}\footnote{While the terminology is
similar, no parallel is to be made with the systematic equations as
defined in early works related to the computation discrete logarithms in
$\bF_{2^n}$, as~\cite{BlFuMuVa84}.}; and the use of algebraic
resolution of bilinear polynomial systems in the individual logarithm
phase.

In this work, we present a new discrete logarithm algorithm, in
the same vein as in~\cite{Joux13} that uses an asymptotically more
efficient descent approach. The main result gives a {\it
  quasi-polynomial} heuristic complexity for the DLP in finite fields
of small characteristic. By quasi-polynomial, we mean a complexity of
type $n^{O(\log n)}$ where $n$ is the bit-size of the cardinality of
the finite field. Such a complexity is smaller than any
$L(\epsilon)$ for $\epsilon>0$. It remains super-polynomial
in the size of the input, but offers a major asymptotic improvement
compared to $L(1/4+o(1))$.

The key features of our algorithm are the following.
\begin{itemize}
\item We keep the field representation and the systematic equations of~\cite{Joux13}.
\item The algorithmic building blocks are elementary. In particular,
  we avoid the use of Gröbner basis algorithms.
\item The complexity result relies on three key heuristics:
the existence of a polynomial representation of the appropriate
form; the fact that the smoothness probabilities of some non-uniformly
distributed
polynomials are similar to the probabilities for uniformly random
polynomials of the same degree; and the linear independence of some
finite field elements related to the action of $\PGL_2(\bF_{q})$.
\end{itemize}

The heuristics are very close to the ones used in~\cite{Joux13}. In
addition to the arguments in favor of these heuristics already given
in~\cite{Joux13}, we performed some experiments to validate them on
practical instances.  \medskip

Although we insist on the case of finite fields of small
characteristic, where quasi-polynomial complexity is obtained, our new
algorithm improves the complexity of discrete logarithm computations in a
much larger range of finite fields.

More precisely, in finite fields of the form $\bF_{q^k}$, where $q$
grows as $L_{q^k}(\alpha)$, the complexity becomes
$L_{q^k}(\alpha+o(1))$. As a consequence, our algorithm is
asymptotically faster than the Function Field Sieve algorithm in
almost all the range previously covered by this algorithm. Whenever 
$\alpha<1/3$, our new algorithm offers the smallest complexity. For
the limiting case $L(1/3,c)$, the Function Field Sieve remains more
efficient for small values of $c$, and the Number Field Sieve is better
for large values of $c$ (see~\cite{JLVS07}).
\bigskip

This article is organized as follows. In Section~\ref{sec:main}, we state
the main result, and discuss how it can be used to design a complete
discrete logarithm algorithm. In Section~\ref{sec:csq}, we analyze how
this result can be interpreted for various types of finite fields,
including the important case of fields of small characteristic.
Section~\ref{sec:descent-one-step} is devoted to the description of our
new algorithm. It relies on heuristics that are discussed in
Section~\ref{sec:heur}, from a theoretical and a practical point of view.
Before getting to the conclusion, in Section~\ref{sec:improvement}, we
propose a few variants of the algorithm.

\section{Main result}
\label{sec:main}

We start by describing the setting in which our algorithm applies. It is
basically the same as in~\cite{Joux13}: we need a large enough subfield,
and we assume that a sparse representation can be found.  This is
formalized in the following definition.

\begin{definition}
    A finite field $K$ admits a {\em sparse medium subfield representation} if
    \begin{itemize}
        \item it has a subfield of $q^2$ elements for a prime power $q$,
		i.e. $K$ is isomorphic to $\F_{q^{2k}}$ with $k\geq1$;
        \item there exist two polynomials $h_0$ and $h_1$ over
            $\F_{q^2}$ of small degree, such that $h_1X^q-h_0$ has a
            degree $k$ irreducible factor.
    \end{itemize}
\end{definition}

In what follows, we will assume that all the fields under consideration
admit a sparse medium subfield representation. Furthermore, we assume that
the degrees of the polynomials $h_0$ and $h_1$ are uniformly bounded by a
constant $\delta$.  Later, we will provide heuristic arguments for the
fact that any finite field of the form $\F_{q^{2k}}$ with $k \le q+2$
admits a sparse medium subfield representation with polynomials $h_0$ and
$h_1$ of degree at most 2.  But in fact, for our result to hold, allowing
the degrees of $h_0$ and $h_1$ to be bounded by any constant $\delta$
independent of $q$ and $k$ or even allowing $\delta$ to grow
slower than $O(\log q)$ would be sufficient.

In a field in sparse medium subfield representation, elements will
always be represented as polynomials of degree less than $k$ with
coefficients in $\F_{q^2}$. When we talk about the discrete logarithm of
such an element, we implicitly assume that a basis for this discrete
logarithm has been chosen, and that we work in a subgroup whose order has
no small irreducible factor (we refer to the Pohlig-Hellman
algorithm~\cite{PoHe78} to limit ourselves to this case).

\begin{prop}\label{prop:onestep}
    Let $K=\F_{q^{2k}}$ be a finite field that admits a sparse medium subfield
    representation.
    Under the heuristics explained below, there exists an algorithm whose
    complexity is polynomial in $q$ and $k$ and which can be used for the
    following two tasks. 

    \begin{enumerate}
    \item 
        Given an element of $K$ represented by a polynomial
        $P\in\F_{q^2}[X]$ with $2\leq \deg P\leq k-1$,
        the algorithm returns an expression of
        $\log P(X)$ as a linear combination of at most $O(kq^2)$
        logarithms $\log P_i(X)$ with $\deg P_i \leq \lceil
        \frac12 \deg P\rceil$ and of $\log h_1(X)$.

    \item
        The algorithm returns the logarithm of $h_1(X)$ and 
        the logarithms of all the elements of $K$
        of the form $X+a$, for $a$ in $\F_{q^2}$.  
     \end{enumerate}
\end{prop}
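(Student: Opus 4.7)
The plan is to exploit the systematic equation $h_1(X)\, X^q = h_0(X)$ in $K$ together with the classical identity $Y^q - Y = \prod_{\alpha \in \F_q}(Y - \alpha)$. For each $M = \left(\begin{smallmatrix}a & b\\ c & d\end{smallmatrix}\right) \in \PGL_2(\F_{q^2})$, substituting $Y = (aX+b)/(cX+d)$ and clearing the $q$-th powers via the systematic equation yields the base identity in $K$
\[
    h_1(X)(cX+d) \prod_{\alpha \in \F_q}\bigl((aX+b) - \alpha(cX+d)\bigr)
    = (a^q h_0+b^q h_1)(cX+d) - (aX+b)(c^q h_0+d^q h_1),
\]
whose right-hand side $N_M(X)$ is a polynomial of degree at most $\delta + 1$.

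For the second task I would take logarithms of this identity. When $N_M$ splits completely into linear factors over $\F_{q^2}$, the relation becomes linear purely in the $q^2+1$ unknowns $\{\log(X+a) : a \in \F_{q^2}\} \cup \{\log h_1\}$. By the smoothness heuristic applied to the bounded-degree polynomial $N_M$, a constant fraction of matrices $M$ produce such useful relations, so sampling $\Theta(q^2)$ matrices accumulates enough; by the linear-independence heuristic the resulting sparse system has full rank. Solving it over $\Z/\ell\Z$ (with $\ell$ the target subgroup order) takes time polynomial in $q$ and recovers every $\log(X+a)$ and $\log h_1$.

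For the first task (descent) I would replay the same substitution, but applied to $Y = (aP(X)+b)/(cP(X)+d)$, where $P$ is the input polynomial of degree $D$. The key new ingredient is that $P(X)^q = \widehat P(X)/h_1(X)^D$ in $K$, where $\widehat P := h_1^D \, P^{(q)}(h_0/h_1) \in \F_{q^2}[X]$ has degree at most $\delta D$ and $P^{(q)}$ denotes $P$ with coefficients raised to the $q$-th power. The analogous manipulation produces the descent identity in $K$
\[
    h_1^D(cP+d) \prod_{\alpha \in \F_q}\bigl((aP+b) - \alpha(cP+d)\bigr)
    = (a^q \widehat P + b^q h_1^D)(cP+d) - (aP+b)(c^q \widehat P + d^q h_1^D),
\]
whose right-hand side $R_M(X)$ has degree at most $(\delta+1)D = O(k)$. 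Each factor on the left is, up to a constant, a polynomial of the form $P(X) + \beta$ with $\beta$ in an $M$-dependent subset of $\F_{q^2}$ of size $q+1$ (the image of $\bP^1(\F_q)$ under the $\PGL_2$-action induced by $M$). If $R_M$ factors over $\F_{q^2}$ into polynomials of degree at most $\lceil D/2\rceil$, the logged identity becomes a linear relation among the $q^2$ unknowns $\{\log(P+\beta)\}_{\beta \in \F_{q^2}}$, $\log h_1$, and the $O(k)$ factor-base logarithms read off from the factorization of $R_M$. I would enumerate suitable $M$ until $\Theta(q^2)$ of them yield a smooth $R_M$, then solve the $O(q^2)\times O(q^2)$ sparse system to express every $\log(P+\beta)$, in particular $\log P = \log(P+0)$, as a linear combination of $\log h_1$ and the $O(k q^2)$ collected factor-base logarithms.

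The main obstacle is justifying the two heuristics: that $R_M$, though far from uniformly distributed in $\F_{q^2}[X]$, is $\lceil D/2\rceil$-smooth with constant probability as $M$ varies (reducing morally to the standard smoothness probability of a random polynomial of degree at most $(\delta+1)D$), and that the collected relations have full rank over $\Z/\ell\Z$. These are exactly the heuristics to be established in Section~\ref{sec:heur}. Every remaining step---enumerating $M$, computing $\widehat P$ and $R_M$ by polynomial arithmetic modulo the degree-$k$ irreducible factor of $h_1 X^q - h_0$, factoring $R_M$ over $\F_{q^2}$, and sparse linear algebra of dimension $O(q^2)$---runs in time polynomial in $q$ and $k$, giving the claimed complexity and output size.
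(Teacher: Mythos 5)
Your proposal is correct and follows essentially the same route as the paper: the same systematic equation acted on by homographies of $\PGL_2(\F_{q^2})$ to produce $q+1$ translates $P-\beta$ on one side, the same rewriting of the $q$-power side via $X^q\equiv h_0/h_1$ into a numerator of degree at most $(1+\delta)D$ tested for $\lceil D/2\rceil$-smoothness, and the same full-rank linear algebra over the $q^2+1$ translate-unknowns, under the same two heuristics. The only differences are cosmetic (affine rather than projective normalization of the identity, solving the whole system for all $\log(P+\beta)$ instead of expressing the unit vector for $P$ in the row span, and not spelling out that relations depend only on cosets in $\PGL_2(\F_{q^2})/\PGL_2(\F_q)$).
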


Before the presentation of the algorithm, which is made in Section~\ref{sec:descent-one-step}, we explain how to use it as a building block for a complete discrete logarithm algorithm.

Let $P(X)$ be an element of $K$ for which we want to compute the discrete
logarithm. Here $P$ is a polynomial of degree at most $k-1$ and with
coefficients in $\F_{q^2}$. We start by applying the algorithm of Proposition~\ref{prop:onestep} to $P$. We obtain a relation of the form
$$ \log P = e_0 \log h_1 + \sum e_i \log P_i,$$
where the sum has at most $\kappa q^2 k$ terms for a constant
$\kappa$ and the $P_i$'s have degree at most $\lceil \frac12 \deg P\rceil$.
Then, we apply
recursively the algorithm to the $P_i$'s, thus creating a descent
procedure where at each step, a given element $P$  is expressed as a
product of elements, whose degree is at most half the degree of $P$
(rounded up)  and the arity of the descent tree is in $O(q^2 k)$.

At the end of the process, the logarithm of $P$ is expressed as a linear
combination of the logarithms of $h_1$ and of the linear polynomials,
for which the logarithms are computed with the algorithm in
Proposition~\ref{prop:onestep} in its second form.

We are left with the complexity analysis of the descent
process. Each internal node of the descent tree corresponds to one application of
the algorithm of Proposition~\ref{prop:onestep}, therefore each internal
node has a cost which is bounded by a polynomial in $q$ and~$k$. The total cost
of the descent is therefore bounded by the number of nodes in the descent
tree times a polynomial in $q$ and $k$.  The depth of the descent tree is in
$O(\log k)$. The number of nodes of the tree is then less than or equal to
its arity raised to the power of its depth, which is $(q^2
k)^{O(\log k)}$.  Since any polynomial in $q$ and
$k$ is absorbed in the $O()$ notation in the exponent, we obtain the
following result.

\begin{theo}\label{thm}
    Let $K=\F_{q^{2k}}$ be a finite field that admits a sparse medium
    subfield representation. Assuming the same heuristics as in
    Proposition~\ref{prop:onestep}, any discrete logarithm in $K$ can be
    computed in a time bounded by 
    $$ \max(q,k)^{O(\log k)}.$$
\end{theo}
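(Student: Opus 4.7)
The plan is to deduce the theorem by iterating Proposition~\ref{prop:onestep} in a descent tree, so that every element of $K$ has its discrete logarithm reduced, in $O(\log k)$ levels, to the logarithms of $h_1(X)$ and of the linear polynomials $X+a$ with $a\in\F_{q^2}$, which are then supplied directly by part~(2) of Proposition~\ref{prop:onestep}.

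First, given a target element $P(X)\in\F_{q^2}[X]$ of degree at most $k-1$ whose discrete logarithm we want to compute, I apply part~(1) of Proposition~\ref{prop:onestep} to obtain an expression $\log P = e_0\log h_1 + \sum_i e_i \log P_i$ in which at most $\kappa q^2 k$ polynomials $P_i$ occur, each of degree at most $\lceil\tfrac12 \deg P\rceil$. I then recurse on each $P_i$ of degree $\geq 2$: at the next level, each $P_i$ is again rewritten as a linear combination of logarithms of $h_1$ and of polynomials of degree at most $\lceil\tfrac12 \deg P_i\rceil$. Continuing, the degree is (at worst) halved at every level, so after at most $\lceil \log_2 k\rceil = O(\log k)$ levels all remaining polynomials have degree $\leq 1$. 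At the leaves, their logarithms, together with $\log h_1$, are computed in a single invocation of part~(2) of Proposition~\ref{prop:onestep}.

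To bound the complexity, I note that each internal node of the descent tree is one call to Proposition~\ref{prop:onestep}, which costs a polynomial in $q$ and $k$ under the heuristics; the leaf computation is likewise polynomial in $q$ and $k$. The arity of the tree is bounded by $O(q^2 k)$ and its depth by $O(\log k)$, so the total number of nodes is at most
\[
(q^2 k)^{O(\log k)} = \max(q,k)^{O(\log k)}.
\]
Multiplying by the per-node polynomial cost and absorbing the polynomial factor into the $O(\log k)$ in the exponent (since $\mathrm{poly}(q,k) \le \max(q,k)^{O(1)}$) yields the claimed bound $\max(q,k)^{O(\log k)}$.

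The only subtle point, and what I would treat carefully in writing, is the accounting at each level: one must check that the $\kappa q^2 k$ factor and the polynomial-per-node overhead really do fold cleanly into the $O(\log k)$ exponent, and that no extra logarithms leak into the expression which are not already handled by part~(2) of the proposition (in particular $\log h_1$, which appears at every internal node, is accounted for only once at the end). No new heuristics beyond those of Proposition~\ref{prop:onestep} are required, so the proof is essentially a complexity bookkeeping argument on top of that proposition.
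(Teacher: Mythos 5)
Your proposal is correct and follows essentially the same route as the paper: recursive application of Proposition~\ref{prop:onestep} yields a descent tree of arity $O(q^2k)$ and depth $O(\log k)$, the number of nodes is bounded by $(q^2k)^{O(\log k)}$, and the polynomial per-node cost (including the final linear-polynomial computation via part~(2)) is absorbed into the exponent to give $\max(q,k)^{O(\log k)}$. No substantive difference from the paper's argument.
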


\section{Consequences for various ranges of parameters}
\label{sec:csq}

We now discuss the implications of Theorem~\ref{thm} depending on the
properties of the finite field $\F_Q$ where we want to compute discrete
logarithms in the first place. The complexities will be expressed in
terms of $\log Q$, which is the size of the input.

Three cases are considered. In the first one, the finite field admits a
sparse medium subfield representation, where $q$ and $k$ are almost
equal. This is the optimal case. Then we consider the case where the
finite field has small (maybe constant) characteristic. And finally, we
consider the case where the characteristic is getting larger so that the
only available subfield is a bit too large for the algorithm to have an
optimal complexity.

In the following, we always assume that for any field of the form
$\F_{q^{2k}}$, we can find a sparse medium subfield representation.

\subsection{Case where the field is $\F_{q^{2k}}$, with $q\approx k$}

The finite fields $\F_Q = \F_{q^{2k}}$ for which $q$ and $k$ are almost
equal are tailored for our algorithm. In that case, the complexity of
Theorem~\ref{thm} becomes $q^{O(\log q)}$. Since $Q \approx q^{2q}$, we
have $q=(\log Q)^{O(1)}$. This gives an expression of the form
$2^{O\left((\log \log Q)^2\right)}$, which is sometimes called
quasi-polynomial in complexity theory.

\begin{cor}\label{cor1}
    For finite fields of cardinality $Q = q^{2k}$ with $q+O(1)\geq k$
    and $q=(\log Q)^{O(1)}$,
    there exists a heuristic algorithm for computing discrete logarithms
    in quasi-polynomial time
    $$ 2^{O\left((\log \log Q)^2\right)}.$$
\end{cor}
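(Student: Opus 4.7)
The plan is to obtain the corollary by a short chain of substitutions in the complexity bound of Theorem~\ref{thm}. Under the hypotheses, the finite field $\F_Q = \F_{q^{2k}}$ is assumed to admit a sparse medium subfield representation, so Theorem~\ref{thm} applies and yields a discrete logarithm algorithm running in time $\max(q,k)^{O(\log k)}$. No new algorithmic idea is needed; the work is purely to rewrite this bound in terms of $\log Q$.

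The first step is to use the hypothesis $q+O(1) \geq k$ to replace $\max(q,k)$ by $q$: the additive constant is absorbed into the $O(\cdot)$ in the exponent (or into the base, since $(q+c)^{O(\log k)} = q^{O(\log k)}$ for $q$ large enough), giving a bound of the form $q^{O(\log k)}$. The same hypothesis implies $\log k = O(\log q)$, so this further simplifies to $q^{O(\log q)}$, or equivalently $\exp\bigl(O((\log q)^2)\bigr)$.

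Next, I invoke the second hypothesis $q = (\log Q)^{O(1)}$, which gives $\log q = O(\log \log Q)$. Substituting into $\exp\bigl(O((\log q)^2)\bigr)$ yields the claimed bound
\[
\exp\bigl(O((\log \log Q)^2)\bigr) \;=\; 2^{O((\log \log Q)^2)}.
\]

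There is essentially no obstacle in the argument; the only point worth a brief sanity check is that the two hypotheses are mutually consistent with $Q = q^{2k}$. Indeed, if $k \approx q$, then $\log Q \approx 2q \log q$, so $q$ is automatically between $\log Q / (2\log\log Q)$ and $\log Q$, and the condition $q=(\log Q)^{O(1)}$ is in fact very mild. Thus the corollary follows directly from Theorem~\ref{thm} by the above calculation.
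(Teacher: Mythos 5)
Your proposal is correct and follows essentially the same route as the paper: apply Theorem~\ref{thm}, use $q+O(1)\geq k$ to reduce the bound to $q^{O(\log q)}$, then substitute $q=(\log Q)^{O(1)}$ to obtain $2^{O((\log\log Q)^2)}$. The paper's argument (the paragraph preceding the corollary) is exactly this chain of substitutions, so there is nothing to add.
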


We mention a few cases which are almost directly covered by
Corollary~\ref{cor1}. First, we consider the case where $Q=p^n$ with
$p$ a prime bounded by $(\log
Q)^{O(1)}$, and yet large enough so that $n \le (p+\delta)$. In this
case $\F_Q$, or possibly $\F_{Q^2}$ if $n$ is odd, can be represented in
such a way that Corollary~\ref{cor1} applies.

Much the same can be said in the case where $n$ is composite and factors
nicely, so that $\F_Q$ admits a large enough subfield $\F_q$ with
$q=p^m$. This can be used to solve certain discrete logarithms in, say,
$\bF_{2^n}$ for adequately chosen $n$ (much similar to records tackled
by~\cite{record1778,record1971,record4080,record6120,record6168}).

\subsection{Case where the characteristic is polynomial in the input size}

Let now $\F_Q$ be a finite field whose characteristic $p$ is bounded by
$(\log Q)^{O(1)}$, and let $n=\log Q / \log p$, so that $Q = p^n$. While
we have seen that Corollary~\ref{cor1} can be used to treat some cases,
its applicability might be hindered by the absence of an appropriately
sized subfield: $p$ might be as small as $2$,
and $n$ might not factor adequately. In those cases, we use the same strategy as
in~\cite{Joux13} and embed the discrete logarithm problem in $\F_Q$ into
a discrete logarithm problem in a larger field.

Let $k$ be $n$ if $n$ is odd and $n/2$ if $n$ is even. Then, we set
$q = p^{\lceil \log_p k \rceil}$, and we work in the field $\F_{q^{2k}}$.
By construction this field contains $\F_Q$ (because $p|q$ and $n|2k$) and
it is in the range of applicability of Theorem~\ref{thm}. Therefore,
one can solve a discrete logarithm problem in $\F_Q$ in time
$\max(q, k)^{O(\log k)}$. Rewriting this complexity in terms of $Q$, we get
$\log_p(Q)^{O(\log\log Q)}$. And finally, we get a similar complexity
result as in the previous case. Of course, since we had to embed in a
larger field, the constant hidden in the $O()$ is larger than for
Corollary~\ref{cor1}.

\begin{cor}\label{cor2}
    For finite fields of cardinality $Q$ and characteristic bounded by
    $\log(Q)^{O(1)}$, there exists a heuristic algorithm for 
    computing discrete logarithms in quasi-polynomial time
    $$ 2^{O\left((\log \log Q)^2\right)}.$$
\end{cor}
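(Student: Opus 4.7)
The plan is to reduce Corollary~\ref{cor2} to Theorem~\ref{thm} by embedding $\F_Q$ into a larger ambient field $\F_{q^{2k}}$ that lies in the range of applicability of the theorem. Writing $Q = p^n$, I would choose $k = n$ when $n$ is odd and $k = n/2$ when $n$ is even, and then set $q = p^{\lceil \log_p k \rceil}$, i.e.\ the smallest power of $p$ at least equal to $k$. By construction $p \mid q$ and $n \mid 2k$, so $\F_Q \subseteq \F_{q^{2k}}$; moreover $k \le q \le pk$, so the condition $k \le q+2$ needed to guarantee a sparse medium subfield representation (as assumed throughout Section~\ref{sec:csq}) holds.

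With the embedding in place, I would lift the input discrete logarithm problem from $\F_Q^\times$ into $\F_{q^{2k}}^\times$ in the usual way, taking care to work in a subgroup whose order has no small prime factor, exactly as allowed by the Pohlig--Hellman framework built into Proposition~\ref{prop:onestep}. Theorem~\ref{thm} then provides, under the same heuristics, a discrete logarithm algorithm in $\F_{q^{2k}}$ with running time $\max(q,k)^{O(\log k)}$.

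It remains to translate the bound into a function of $\log Q$. Since $k \le n = \log Q / \log p$, we have $\log k \le \log\log Q$; and since $q \le pk$ with $\log p = O(\log\log Q)$ by hypothesis, we also have $\log \max(q,k) = O(\log\log Q)$. Combining,
\[
\max(q,k)^{O(\log k)} \;=\; 2^{O(\log\log Q)\cdot O(\log\log Q)} \;=\; 2^{O((\log\log Q)^2)},
\]
which is the claimed quasi-polynomial complexity. The only mildly delicate point is the parity case distinction needed to ensure $n \mid 2k$; no new heuristic assumption is introduced beyond those already feeding into Theorem~\ref{thm}, so the real difficulty of the statement is entirely encapsulated in that theorem, and the reduction here is essentially a bookkeeping exercise on the sizes of $q$, $k$, and $p$.
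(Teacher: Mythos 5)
Your proposal is correct and is essentially identical to the paper's own argument: the same choice of $k$ ($n$ or $n/2$ by parity), the same $q = p^{\lceil \log_p k \rceil}$, the same embedding $\F_Q \subseteq \F_{q^{2k}}$, and the same translation of $\max(q,k)^{O(\log k)}$ into $2^{O((\log\log Q)^2)}$ using $\log p = O(\log\log Q)$ and $k \le \log Q$.
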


We emphasize that the case $\F_{2^n}$ for a prime $n$ corresponds to
this case. A direct consequence of Corollary~\ref{cor2} is that discrete logarithms in $\F_{2^n}$ can be computed in
quasi-polynomial time $2^{O((\log n)^2)}$.

\subsection{Case where $q = L_{q^{2k}}(\alpha)$}
If the characteristic of the base field is not so small compared to
the extension degree, the complexity of our algorithm does not keep
its nice quasi-polynomial form. However, in almost the whole range of
applicability of the Function Field Sieve algorithm, our algorithm is
asymptotically better than FFS.

We consider here finite fields that can be put into the form $\F_Q =
\F_{q^{2k}}$, where $q$ grows not faster than an expression of the form
$L_Q(\alpha)$. In the following, we assume that there is equality, which
is of course the worst case. The condition can then be rewritten as
$\log q = O((\log Q)^\alpha(\log\log
Q)^{1-\alpha})$ and therefore $k = \log Q / \log q = O((\log Q / \log\log
Q)^{1-\alpha})$. In particular we have $k\leq q+\delta$, so that
Theorem~\ref{thm} can be applied and gives a complexity of $q^{O(\log
k)}$. This yields the following result.

\begin{cor}\label{cor3}
    For finite fields of the form $\F_Q = \F_{q^{2k}}$ where $q$ is
    bounded by $L_Q(\alpha)$, there exists a heuristic algorithm for computing
    discrete logarithms in subexponential time
    $$ L_Q(\alpha)^{O(\log \log Q)}.$$
\end{cor}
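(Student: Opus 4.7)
The plan is to verify that the hypothesis of Theorem~\ref{thm} is satisfied (in particular the implicit requirement $k \le q+\delta$ coming from the existence of a sparse medium subfield representation with $h_0,h_1$ of bounded degree), and then to carry out a direct complexity computation from the conclusion of Theorem~\ref{thm}.

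First, from $q \le L_Q(\alpha)$ I would deduce $\log q = O\bigl((\log Q)^\alpha(\log\log Q)^{1-\alpha}\bigr)$. Using $\log Q = 2k\log q$ this yields
$$ k \;=\; \frac{\log Q}{2\log q} \;=\; O\!\left(\left(\frac{\log Q}{\log\log Q}\right)^{1-\alpha}\right).$$
In particular, for $\alpha < 1$ (the interesting range), $k$ grows strictly slower than any positive power of $\log Q$, while $\log q$ grows like $(\log Q)^\alpha(\log\log Q)^{1-\alpha}$, so $q$ eventually dominates $k$ and the condition $k \le q+\delta$ holds. Combined with the standing assumption of the section that any $\F_{q^{2k}}$ admits a sparse medium subfield representation, this puts us inside the scope of Theorem~\ref{thm}.

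Next I would simply invoke Theorem~\ref{thm}, which gives a complexity of $\max(q,k)^{O(\log k)}$. Since $q \ge k$ in our range, this simplifies to $q^{O(\log k)}$. Bounding $q$ by $L_Q(\alpha)$ and noting that $\log k = O(\log\log Q)$ from the estimate on $k$ above, the overall runtime is
$$ q^{O(\log k)} \;\le\; L_Q(\alpha)^{O(\log\log Q)},$$
which is the desired conclusion.

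There is essentially no conceptual obstacle: the corollary is a pure parameter-tracking exercise on top of Theorem~\ref{thm}. The only point that deserves a moment of care is checking that the hypotheses of Theorem~\ref{thm} are met in the worst case $q = L_Q(\alpha)$, specifically that $k \le q+\delta$; this boils down to the elementary inequality $(\log Q/\log\log Q)^{1-\alpha} \ll \exp\bigl((\log Q)^\alpha(\log\log Q)^{1-\alpha}\bigr)$, which holds for all sufficiently large $Q$ once $\alpha \in (0,1)$.
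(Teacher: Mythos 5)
Your proposal is correct and follows essentially the same route as the paper: take the worst case $q=L_Q(\alpha)$, deduce $k=O\bigl((\log Q/\log\log Q)^{1-\alpha}\bigr)$ so that $k\le q+\delta$ and Theorem~\ref{thm} applies, then bound $q^{O(\log k)}$ by $L_Q(\alpha)^{O(\log\log Q)}$. Just note (as the paper does explicitly) that the estimate on $k$ uses the worst-case equality $q=L_Q(\alpha)$ rather than the mere upper bound $q\le L_Q(\alpha)$; for smaller $q$ one has $\max(q,k)^{O(\log k)}\le(\log Q)^{O(\log\log Q)}$, which is still within the claimed bound.
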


This complexity is smaller than $L_Q(\alpha')$ for any $\alpha' >
\alpha$. Hence, for any $\alpha<1/3$, our algorithm is faster than the
best previously known algorithm, namely FFS and its variants.

\section{Main algorithm: proof of Proposition~\ref{prop:onestep}}
\label{sec:descent-one-step}

The algorithm is essentially the same for proving the two points of
Proposition~\ref{prop:onestep}. The strategy is to find relations between
the given polynomial $P(X)$ and its translates by a constant in
$\F_{q^2}$. Let $D$ be the degree of $P(X)$, that we assume to be at
least 1 and at most $k-1$.

The key to find relations is the {\em systematic equation}:
\begin{equation}\label{eq:frobenius}
    X^q-X=\prod_{a\in \F_q}(X-a)\text.
\end{equation}

We like to view Equation~\eqref{eq:frobenius} as involving
the projective line $\bP^1(\F_q)$. Let $\cS=\{(\alpha,\beta)\}$ be a set
of representatives of the $q+1$ points $(\alpha:\beta)\in\bP^1(\F_q)$,
chosen adequately so that the following equality holds.
\begin{equation}
    \label{eq:frobenius-proj}
    X^qY-XY^q=\prod_{(\alpha,\beta)\in\cS}(\beta X-\alpha Y)\text.
\end{equation}

To make translates of $P(X)$ appear, we consider the action of {\em
homographies}. 
Any matrix $m = \begin{pmatrix}a & b\\ c& d\end{pmatrix}$ acts on $P(X)$
with the following formula:
$$m\cdot P = \frac{aP+b}{cP+d}.$$
In the following, this action will become trivial if the matrix $m$ has
entries that are defined over $\F_q$. This is also the case if $m$
is non-invertible. Finally, it is clear that multiplying all the
entries of $m$ by a non-zero constant does not change its action on
$P(X)$. Therefore the matrices of the homographies that we consider are
going to be taken in the following set of cosets:
$$ \cP_q = \PGL(\F_{q^2}) / \PGL(\F_q).$$
(Note that in general $\PGL_2(\F_q)$ is not a
normal subgroup of $\PGL_2(\F_{q^2})$, so that $\cP_q$ is not a quotient
group.) 

To each element $m = \begin{pmatrix}a & b\\ c& d\end{pmatrix}\in
\cP_q$, we associate the equation~\eqref{eq:Em} obtained by substituting $aP+b$
and $cP+d$ in place of $X$ and $Y$ in 
Equation~\eqref{eq:frobenius-proj}.
\def\xfunc{\mathop{\raise-.0125ex\hbox{x}}}
\begin{align*}
    \tag{$E_m$}\label{eq:Em}
(aP+b)^q(cP+d) - (aP+b)(cP+d)^q & =
    \prod_{(\alpha,\beta)\in\cS} \beta(aP+b) - \alpha(cP+d) \\
 & =\prod_{(\alpha,\beta)\in\cS}
       (-c\alpha + a\beta)  P - (d\alpha - b\beta) \\
 & =\lambda\prod_{(\alpha,\beta)\in\cS}
       P - \xfunc(m^{-1} \cdot (\alpha:\beta))\text.
\end{align*}
This sequence of formulae calls for a short comment because of an abuse
of notation in the last expression. First, $\lambda$ is the constant in
$\F_{q^2}$ which makes the leading terms of the two sides match. Then,
the term $P-\xfunc(m^{-1} \cdot
(\alpha:\beta))$ denotes $P-u$ when $m^{-1} \cdot (\alpha:\beta)=(u:1)$
(whence we have $u=\frac{d\alpha - b\beta}{-c\alpha + a\beta}$), or $1$
if $m^{-1} \cdot (\alpha:\beta)=\infty$. The latter may occur since when
$a/c$ is in $\F_q$, the expression $-c\alpha + a\beta$ vanishes for a
point $(\alpha:\beta)\in\bP^1(\F_{q})$ so that one of the factors of the
product contains no term in $P(X)$.
 
Hence the right-hand side of Equation~\eqref{eq:Em} is, up to a
multiplicative constant, a product of $q+1$ or $q$ translates of the
target $P(X)$ by elements of
$\F_{q^2}$. The equation obtained is actually related to the set of
points $m^{-1}\cdot\bP^1(\bF_q)\subset \bP^1(\bF_{q^2})$.
\medskip

The polynomial on the left-hand side of~\eqref{eq:Em} can be rewritten as
a smaller degree equivalent. For this, we use the
special form of the defining polynomial: in $K$ we have $X^q \equiv
\frac{h_0(X)}{h_1(X)}$.  Let us denote by $\tilde{a}$ the element $a^q$ when $a$
is any element of $\F_{q^2}$. Furthermore, we write
$\tilde{P}(X)$ the polynomial $P(X)$ with all its coefficients
raised to the power $q$. The left-hand side of~\eqref{eq:Em} is
$$(\tilde{a}\tilde{P}(X^q)+\tilde{b})(cP(X)+d)
- (aP(X) + b)(\tilde{c}\tilde{P}(X^q)+\tilde{d}),$$
and using the defining equation for the field $K$, it is congruent to
$$
\cL_m \coloneq \left(\tilde{a}\tilde{P}\left(\frac{h_0(X)}{h_1(X)}\right)+\tilde{b}\right)(cP(X)+d)
- (aP(X) +
b)\left(\tilde{c}\tilde{P}\left(\frac{h_0(X)}{h_1(X)}\right)+\tilde{d}\right).
$$
The denominator of $\cL_m$ is a
power of~$h_1$ and its numerator has degree at most $(1+\delta) D$ where
$\delta=\max(\deg h_0,\deg h_1)$. We say that $m\in\cP_q$ yields a
relation if this numerator of $\cL_m$ is $\lceil D/2 \rceil$-smooth. 

To any $m\in\cP_q$, we associate a row vector $v(m)$ of dimension $q^2+1$ in the following
way. Coordinates are indexed by $\mu\in\bP^1(\F_{q^2})$, and the value
associated to $\mu\in\F_{q^2}$ is $1$ or $0$ depending on whether
$P-\xfunc(\mu)$ appears in the right-hand side of Equation~\eqref{eq:Em}. Note that
exactly $q+1$ coordinates are $1$ for each $m$. Equivalently, we may write
\begin{equation}\label{eq:v(m)}
v(m)_{\mu\in\bP^1(\F_{q^2})}=\left\{
    \begin{array}{l}
        1\text{ if }\mu=m^{-1}\cdot(\alpha:\beta) \text{ with }
	    (\alpha:\beta)\in\bP^1(\F_q),\\
        0\text{ otherwise}.
    \end{array}
\right.
\end{equation}

We associate to the polynomial $P$ a matrix $H(P)$ whose rows are
the vectors $v(m)$ for which $m$ yields a relation, taking at most one
matrix $m$ in each coset of $\cP_q$. The validity of Proposition~\ref{prop:onestep} crucially relies on the
following heuristic.

\begin{heuristic}\label{heu:fullrank}
    For any $P(X)$, the set of rows $v(m)$ for cosets
$m\in\cP_q$ that yield a relation form a matrix which has full rank $q^2+1$.
\end{heuristic}

As we will note in Section~\ref{sec:heur}, the matrix $H(P)$ is
heuristically expected to have $\Theta(q^3)$ rows, where the implicit
constant depends on $\delta$. This means that for our decomposition
procedure to work, we rely on the fact that $q$ is large enough
(otherwise $H(P)$ may have less than $q^2+1$ rows, which precludes the
possibility that it have rank $q^2+1$).
\medskip

The first point of Proposition~\ref{prop:onestep}, where we descend a
polynomial $P(X)$ of degree $D$ at least 2, follows by linear algebra
on this matrix.
Since we assume that the matrix has full rank, 
then the vector $(\ldots,0,1,0,\ldots)$ with $1$
corresponding to $P(X)$ can be written as a linear combination of the rows.
When doing this linear combination on the equations~\eqref{eq:Em} corresponding to
$P$ we write $\log P(X)$ as a linear combination of $\log P_i$ where
$P_i(x)$ are the elements occurring in the
left-hand sides of the equations. Since there are $O(q^2)$ columns, the elimination process
involves at most $O(q^2)$ rows, and since each row corresponds to
an equation~\eqref{eq:Em}, it involves at most $\deg \cL_m\leq (1+\delta)D$ polynomials in the
left-hand-side\footnote{This estimate of the number of irreducible
  factors is a pessimistic upper bound. In practice, one expects to
  have only $O(\log D)$ factors on average. Since the crude estimate
  does not change the overall complexity, we keep it that way to avoid
  adding another heuristic.}.  In total, the polynomial $D$ is
expressed by a linear combination of at most $O(q^2D)$ polynomials of
degree less than $\lceil D/2\rceil$. The logarithm of $h_1(X)$ is also
involved, as a denominator of $\cL_m$. We have not made precise the
constant in $\F_{q^2}^*$ which occurs to take care of the leading
coefficients. Since discrete logarithms in $\F_{q^2}^*$ can certainly be
computed in polynomial time in $q$, this is not a problem.

Since the order of $\PGL_2(\F_{q^i})$ is $q^{3i}-q^i$, the set of cosets
$\cP_q$ has $q^3+q$ elements. For each $m \in\cP_q$, testing
whether~\eqref{eq:Em}
yields a relation amounts to some polynomial manipulations and a 
smoothness test. All of them can be done in polynomial time in $q$ and
the degree of $P(X)$ which is bounded by $k$. Finally, the linear algebra
step can be done in $O(q^{2\omega})$ using asymptotically fast matrix
multiplication algorithms, or alternatively $O(q^5)$ operations using
sparse matrix techniques.
Indeed, we have $q+1$
non-zero entries per row and a size of $q^2+1$.
Therefore, the overall cost is polynomial in $q$ and $k$ as claimed.
\medskip

For the second part of Proposition~\ref{prop:onestep} we replace $P$ by $X$ during the
construction of the matrix. In that case, both sides of the
equations~\eqref{eq:Em} involve only linear polynomials.
Hence we obtain a linear system whose unknowns
are $\log (X+a)$ with $a\in\F_{q^2}$. Since Heuristic~\ref{heu:fullrank}
would give us only the full rank of the system corresponding to the
right-hand sides of the equations~\eqref{eq:Em}, we have to rely on a
specific heuristic for this step:
\begin{heuristic}\label{heu:linfullrank}
    The linear system constructed from all the equations~\eqref{eq:Em}
    for $P(X)=X$ has full rank.
\end{heuristic}
Assuming that this heuristic holds, we can
solve the linear system and obtain the discrete logarithms of the linear
polynomials and of $h_1(X)$. 

\section{Supporting the heuristic argument in the proof}
\label{sec:heur}

For Heuristic~\ref{heu:fullrank}, we propose two approaches to support this
heuristic. Both allow to gain some confidence in the validity of
the heuristic, but of course none affect the heuristic nature of this
statement.

For the first line of justification, we denote by $\cH$ the matrix of all
the $\#\cP_q=q^3+q$ vectors $v(m)$ defined as in
Equation~\eqref{eq:v(m)}. Associated to a polynomial~$P$,
Section~\ref{sec:descent-one-step} defines the matrix
$H(P)$ formed of the
rows $v(m)$ such that the numerator of $\cL_m$ is smooth. We will give
heuristics that $H(P)$ has $\Theta(q^3)$ rows and then prove that $\cH$ has
rank $q^2+1$, which of course does not prove that its submatrix $H(P)$ has full rank. 

In order to estimate the number of rows of $H(P)$ we assume that the
numerator of $\cL_m$ has
the same probability to be $\lceil \frac{D}{2}\rceil$-smooth as a random
polynomial of same degree. In this paragraph, we assume that the
degrees of $h_0$ and $h_1$ are bounded by $2$, merely to avoid awkward
notations; the result holds for any constant bound $\delta$.
The degree of the numerator of $\cL_m$ is then bounded by $3D$, so we have
to estimate
the probability that a polynomial in $\F_{q^2}[X]$ of degree $3D$ is
$\lceil \frac{D}{2}\rceil$-smooth. For any prime power $q$ and integers
$1\leq m\leq n$, we denote by $N_q(m,n)$ the number of $m$-smooth monic
polynomials of degree $n$. Using analytic methods, Panario et
al. gave a precise estimate of this quantity (Theorem~$1$
of~\cite{FGP98}):
\begin{equation}\label{eq:Flajolet}
	N_{q}(n,m)=q^n \rho\left(\frac{n}{m}\right)\left(1+O\left(\frac{\log
n}{m}\right)\right),
\end{equation}
where $\rho$ is Dickman's function defined as the unique continuous
function such that $\rho(u)=1$ on $[0,1]$ and $u\rho'(u)=\rho(u-1)$ for
$u>1$. We stress that the constant $\kappa$ hidden in the $O()$ notation
is independent of $q$.
In our case, we are interested in the value of $N_{q^2}(3D, \lceil
\frac{D}{2}\rceil)$. Let us call $D_0$ the least
integer such that $1+\kappa\left(\frac{\log (3D)}{\lceil D/2\rceil}\right)$
is at least $1/2$. For $D>D_0$, we will use the
formula~\eqref{eq:Flajolet}; and for $D\le D_0$, we will use the crude
estimate $N_q(n,m) \ge N_q(n,1) = q^n/n!$. Hence the smoothness
probability of $\cL_m$ is at least
$\min\left(\frac{1}{2}\rho(6),1/(3D_0)!\right)$.

More generally, if $\deg h_0$ and $\deg h_1$ are bounded by a constant
$\delta$ then we have a smoothness probability of $\rho(2\delta+2)$ times
an absolute constant. Since we have $q^3+q$ candidates and a constant
probability of success, $H(P)$ has $\Theta(q^3)$ rows.


Now, unless some theoretical obstruction
occurs, we expect a matrix over $\F_\ell$ to have full rank with
probability at least $1-\frac{1}{\ell}$. The matrix $\cH$ is however peculiar, and does enjoy
regularity properties which are worth noticing.
For instance, we have the following proposition.
\begin{prop}
    \label{prop:bigmat-fullrank}
Let $\ell$ be a prime not dividing $q^3-q$. Then the matrix
$\cH$ over $\F_\ell$ has full rank $q^2+1$.
\end{prop}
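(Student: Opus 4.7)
The plan is to recognize $\cH$ as the incidence matrix of a 2-design on $\bP^1(\F_{q^2})$ and to read off its rank from a short spectral computation. I would first observe that the rows of $\cH$ are the indicator vectors of the translates $m^{-1}\cdot\bP^1(\F_q)\subseteq\bP^1(\F_{q^2})$ as $m$ ranges over $\cP_q$, and that these translates are genuinely distinct: a projective transformation of $\bP^1(\F_{q^2})$ that fixes $\bP^1(\F_q)$ setwise is determined by its action on three points of $\bP^1(\F_q)$ and hence already lies in $\PGL_2(\F_q)$, so the setwise stabilizer is exactly $H=\PGL_2(\F_q)$. Since $G=\PGL_2(\F_{q^2})$ acts sharply $3$-transitively, and a fortiori $2$-transitively, on $\bP^1(\F_{q^2})$, every pair of distinct points is contained in the same number of translates, and we obtain a $2$-design with parameters $v=q^2+1$, $k=q+1$, $b=q^3+q$, $r=q^2+q$, $\lambda=q+1$ by elementary counting.

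Then I would use the standard incidence identity $\cH^T\cH=(r-\lambda)I+\lambda J=(q^2-1)I+(q+1)J$, where $J$ denotes the all-ones matrix of size $v$. Since $J$ has a one-dimensional constant eigenspace with eigenvalue $v=q^2+1$ and a $q^2$-dimensional sum-zero eigenspace with eigenvalue $0$, the eigenvalues of $\cH^T\cH$ are $(q^2-1)+(q+1)(q^2+1)=q(q+1)^2$ and $(q-1)(q+1)$, with multiplicities $1$ and $q^2$ respectively.

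To conclude, the hypothesis $\ell\nmid q^3-q=q(q-1)(q+1)$ together with the primality of $\ell$ implies that $\ell$ is coprime to each of $q$, $q-1$, $q+1$, so both eigenvalues are units in $\F_\ell$. Hence $\cH^T\cH$ is invertible over $\F_\ell$ and $\cH$ has full column rank $q^2+1$. The only step that genuinely requires care is the identification of the setwise stabilizer of $\bP^1(\F_q)$, which ensures that $|\cP_q|=q^3+q$ distinct blocks actually appear; everything else is a standard design-theoretic identity or a one-line check that both eigenvalues are units.
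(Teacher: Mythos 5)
Your proof is correct and follows essentially the same route as the paper's first argument: identifying $\cH$ as the incidence matrix of a design on $\bP^1(\F_{q^2})$ (the paper cites the inversive plane, a $3$-$(q^2+1,q+1,1)$ design), invoking the Gram identity $\cH^T\cH=(q^2-1)I+(q+1)J$, and concluding invertibility modulo $\ell\nmid q^3-q$. The only difference is that you derive the design parameters and the spectral computation from scratch (including the stabilizer argument showing the blocks are distinct) where the paper cites Stinson, and the paper additionally offers a second, elementary row-span proof that you do not need.
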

\begin{proof}
    We may obtain this result in two ways. First, 
\cH\ is
the incidence matrix of a $3-(q^2+1,q+1,1)$ combinatorial design called
\emph{inversive plane} (see e.g.~\cite[Theorem 9.27]{Stinson03}). As such
we obtain the identity $$\cH^T\cH=(q+1)(J_{q^2+1}-(1-q)I_{q^2+1})$$
(see~\cite[Theorem 1.13 and Corollary 9.6]{Stinson03}), where
$J_n$ is the $n\times n$ matrix with all entries equal to one, and $I_n$
is the $n\times n$ identity matrix. This readily gives the result exactly
as announced.

    We also provide an elementary proof of the Proposition.
    We have a
    bijection between rows of $\cH$ and the different possible image
    sets of the projective line $\bP^1(\F_q)$ within $\bP^1(\F_{q^2})$,
    under injections of the form $(\alpha:\beta)\mapsto
    m^{-1}\cdot(\alpha:\beta)$. All these $q^3+q$ image sets have size
    $q+1$, and by symmetry all points of $\bP^1(\F_{q^2})$ are reached
    equally often. Therefore, the sum of all rows of $\cH$ is the
    vector whose coordinates are all equal to $\frac1{1+q^2}(q^3+q)(q+1)=q^2+q$.

    Let us now consider the sum of the rows in $\cH$ whose first
    coordinate is $1$ (as we have just shown, we have $q^2+q$ such rows).
    Those correspond to image sets of $\bP^1(\F_q)$ which contain one
    particular point, say $(0:1)$. The value of the sum for any other
    coordinate indexed by e.g.\ $Q\in\bP^1(\F_{q^2})$ is the number of
    image sets $m^{-1}\cdot\bP^1(\F_q)$ which contain both $(0:1)$ and
    $Q$, which we prove is equal to $q+1$ as follows. Without loss of generality, we may assume $Q=\infty=(1:0)$.
    We need to count the relevant
    homographies $m^{-1}\in\PGL_2(\F_{q^2})$, modulo
    $\PGL_2(\F_q)$-equivalence $m\equiv hm$. By
    $\PGL_2(\F_q)$-equivalence, we may without loss of generality assume
    that $m^{-1}$ fixes $(0:1)$ and $(1:0)$.
    Letting $m^{-1}=
\begin{pmatrix}a&b\\c&d\end{pmatrix}$, we obtain $(b:d)=(0:1)$ and
    $(a:c)=(1:0)$, whence $b=c=0$, and both $a,d\not=0$. We may
    normalize to $d=1$, and notice that multiplication of $a$ by a scalar
    in $\F_q^*$ is absorbed in $\PGL_2(\F_q)$-equivalence. Therefore
    the number of suitable $m$ is $\#{\F_{q^2}^*}/{\F_q^*}=q+1$.

    These two facts show that the row span of $\cH$
    contains the vectors $(q^2+q, \ldots, q^2+q)$ and $(q^2+q, q+1,
    \ldots, q+1)$. The vector $(q^3-q,0,\ldots,0)$ is obtained as a linear
    combination of these two vectors, which suffices to prove that
    $\cH$ has full rank, since the same reasoning holds
    for any coordinate.

\end{proof}

Proposition~\ref{prop:bigmat-fullrank}, while encouraging, is clearly not
sufficient. We are, at the moment, unable to provide a proof of a
more useful statement. On the experimental side, it is reasonably easy to
sample arbitrary subsets of the rows of $\cH$ and check for their rank.
To this end, we propose the following experiment. We have considered
small values of $q$ in the range $[16,\ldots,64]$, and made~50 random picks of
subsets $S_i\subset\cP_q$, all of size exactly $q^2+1$. For each we
considered the matrix of the corresponding linear system, which is made of
selected rows of the matrix \cH, and computed its determinant $\delta_i$.
For all values of $q$ considered,
we have observed the following facts.
\begin{itemize}
    \item First, all square matrices considered had full rank over \bZ.
        Furthermore, their determinants had no common factor apart
        possibly from those appearing in the factorization of $q^3-q$ as
        predicted by Proposition~\ref{prop:bigmat-fullrank}. In fact,
        experimentally it seems that only the factors of $q+1$ are
        causing problems.
    \item We also explored the possibility that modulo some primes, the
        determinant could vanish with non-negligible probability. We thus
        computed the pairwise GCD of all~50 determinants computed, for
        each $q$. Again, the only prime factors appearing in the GCDs
        were either originating from the factorization of $q^3-q$, or
        sporadically from the birthday paradox.
\end{itemize}
        \begin{table}
\begin{center}
    \begin{minipage}[t]{0.5\textwidth}
\begin{tabular}{c|c|l|l}
    $q$ & \#trials &  in $\gcd(\{\delta_i\})$ &
        in $\gcd(\delta_i, \delta_j)$\\
    \hline
16 & 50 & 17 & 691\\
17 & 50 & 2, 3 & 431, 691\\
19 & 50 & 2, 5 & none above $q^2$\\
23 & 50 & 2, 3 & none above $q^2$\\
25 & 50 & 2, 13 & none above $q^2$\\
27 & 50 & 2, 7 & 1327\\
29 & 50 & 2, 3, 5 & none above $q^2$\\
31 & 50 & 2 & 1303, 3209\\
32 & 50 & 3, 11 & none above $q^2$\\

\end{tabular}
    \end{minipage}%
    \begin{minipage}[t]{0.5\textwidth}
\begin{tabular}{c|c|l|l}
    $q$ & \#trials &  in $\gcd(\{\delta_i\})$ &
        in $\gcd(\delta_i, \delta_j)$\\
    \hline
37 & 50 & 2, 19 & 2879\\
41 & 50 & 2, 3, 7 & none above $q^2$\\
43 & 50 & 2, 11 & none above $q^2$\\
47 & 50 & 2, 3 & none above $q^2$\\
49 & 50 & 2, 5 & none above $q^2$\\
53 & 50 & 2, 3 & none above $q^2$\\
59 & 50 & 2, 3, 5 & none above $q^2$\\
61 & 50 & 2, 31 & none above $q^2$\\
64 & 50 & 5, 13 & none above $q^2$\\
\end{tabular}
    \end{minipage}%

\caption{\label{tab:experiment1}Prime factors appearing in determinant of
random square submatrices of \cH\ (for one given set of random trials)}
\end{center}
        \end{table}
        These results are 
summarized in
table~\ref{tab:experiment1}, where the last column omits small prime
factors below $q^2$.
Of course, we remark that considering square submatrices is a more demanding check than
what Heuristic~\ref{heu:fullrank} suggests, since our algorithm only
needs a slightly larger matrix of size $\Theta(q^3)\times(q^2+1)$ to have
full rank.
\medskip

A second line of justification is more direct and natural, as it is
possible to implement the algorithm outlined in
Section~\ref{sec:descent-one-step}, and verify that it does provide the
desired result. A \textsc{Magma} implementation validates this claim, and
has been used to implement descent steps for an example field of
degree~$53$ over $\F_{53^2}$. An example step in this context is given
for applying our algorithm to a polynomial of degree~10, attempting to
reduce it to polynomials of degree~6 or less.
Among the 148,930 elements
of $\cP_q$, it sufficed to consider only 71,944 matrices $m$, of which about 3.9\%
led to relations, for a minimum sufficient number of relations equal to
$q^2+1=2810$ (as more than half of the elements of $\cP_q$ had not even
been examined at this point, it is clear that getting more relations was
easy---we did not have to).
As the defining polynomial
for the finite field considered was constructed with $\delta=\deg
h_{0,1}=1$, all left-hand sides involved
had degree 20.
The polynomials appearing in their
factorizations had the
following degrees (the number in brackets give the number of distinct
polynomials found for each degree): 1(2098), 2(2652), 3(2552), 4(2463), 5(2546), 6(2683).
Of course this tiny example size uses no
optimization, and is only intended to check the validity of
Proposition~\ref{prop:onestep}.

\bigskip

As for Heuristic~\ref{heu:linfullrank}, it is already present
in~\cite{Joux13} and~\cite{GGMZ13}, so this is not a new heuristic.
Just like for Heuristic~\ref{heu:fullrank}, it is based on the fact that
the probability that a left-hand side is $1$-smooth and yields a relation
is constant. Therefore, we have a system with $\Theta(q^3)$ relations
between $O(q^2)$ indeterminates, and it seems reasonable to expect that
it has full rank. On the other hand, there is not as much algebraic
structure in the linear system as in Heuristic~\ref{heu:fullrank}, so that
we see no way to support this heuristic apart from testing it on several
inputs. This was already done (including for record computations)
in~\cite{Joux13} and~\cite{GGMZ13}, so we do not elaborate on our own
experiments that confirm again that Heuristic~\ref{heu:linfullrank} seems
to be valid except for tiny values of $q$.

\paragraph{An obstruction to the heuristics.}

As noted by Cheng, Wan and Zhuang~\cite{traps13}, the irreducible factors
of $h_1X^q-h_0$ other than the degree $k$ factor that is used to define
$\bF_{q^{2k}}$ are problematic. Let $P$ be such a problematic polynomial.
The fact that it divides the defining equation implies that it also
divides the $\cL_m$ quantity that is involved when trying to build a
relation that relates $P$ to other polynomials. Therefore the first part
of Proposition~\ref{prop:onestep} can not hold for this $P$. Similarly, if
$P$ is linear, its presence will prevent the second part of
Proposition~\ref{prop:onestep} to hold since the logarithm of $P$ can not
be found with the technique of Section~\ref{sec:descent-one-step}.
We present here a technique to deal with the problematic polynomials.
(The authors of~\cite{traps13} proposed another solution to keep the
quasi-polynomial nature of algorithm.)

\begin{prop}\label{solvetrap}
For each problematic polynomial $P$ of degree $D$, we can find a linear relation
between $\log P$, $\log h_1$ and $O(D)$ logarithms of polynomials of degree at
most $(\delta-1)D$ which are not problematic.
\end{prop}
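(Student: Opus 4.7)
The plan exploits the Frobenius map in $K$, together with the fact that the identity $X^q = h_0/h_1$ holds not only modulo the defining polynomial $\phi$ but also modulo any problematic polynomial $P$. First, I would compute $P^q$ in $K$ in two ways: by the coefficient-wise $q$-th power map on $\F_{q^2}$, $P(X)^q = \tilde P(X^q)$; and by substituting $X^q = h_0/h_1$, this equals $\tilde P(h_0/h_1)$. Clearing denominators yields the polynomial identity
$$ h_1(X)^D P(X)^q \equiv R(X) \pmod \phi, \qquad R(X) \coloneq h_1^D \tilde P(h_0/h_1) \in \F_{q^2}[X], $$
of degree at most $\delta D$. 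The same computation performed modulo $P$ itself (where, as discussed below, $h_1$ is invertible) shows that $R \equiv 0 \pmod P$, so $P \mid R$.

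The key step is a structural lemma: every irreducible problematic factor of $R$ is equal to $P$. If $R_j$ is such a factor, then $R_j \mid h_1 X^q - h_0$ and $\gcd(R_j, h_1) = 1$; consequently $X^q \equiv h_0/h_1 \pmod{R_j}$, and reducing the definition of $R$ modulo $R_j$ yields $R \equiv h_1^D P^q \pmod{R_j}$. Since $R_j \mid R$ and $h_1$ is invertible modulo $R_j$, we deduce $R_j \mid P^q$, and irreducibility forces $R_j = P$.

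We may therefore write $R = P^e R''$ with $e \geq 1$ and $R''$ having no problematic irreducible factor. Cancelling $P^e$ in the identity above gives $h_1^D P^{q - e} \equiv R'' \pmod \phi$, which translates in $K$ to the discrete logarithm relation
$$ (q - e) \log P + D \log h_1 = \log R''. $$
Since $\deg R'' = (\delta - e) D \leq (\delta - 1) D$, factoring $R''$ into irreducibles yields at most $O(D)$ non-problematic polynomials of degree at most $(\delta - 1) D$. Finally, $q - e$ is coprime to the order of the subgroup in which we compute (by the standing assumption that this order has no small prime factor), so we may invert it and solve for $\log P$, obtaining the claimed relation.

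The main obstacle I anticipate is justifying the structural lemma, specifically the hypothesis $\gcd(R_j, h_1) = 1$ used to reduce $R$ modulo $R_j$. This reduces to the mild condition $\gcd(h_0, h_1) = 1$ on the representation: if $R_j$ divided $h_1$, then $R_j \mid h_1 X^q - h_0$ would force $R_j \mid h_0$, a contradiction. One can always arrange $\gcd(h_0, h_1) = 1$ without loss of generality by dividing both polynomials by their common factor, since $\phi$ still divides the reduced $h_1 X^q - h_0$.
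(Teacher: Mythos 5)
Your proof is correct and follows essentially the same route as the paper's: it forms $\cR = h_1^D\,\tilde P(h_0/h_1)$ from $P^q$ via the field equation, shows $P\mid\cR$ while every other irreducible factor of $\cR$ is non-problematic, and inverts $q-v_P$ (which is nonzero and prime to $\ell$ since $v_P\le\delta<q<\ell$) to solve for $\log P$. Your explicit justification that a problematic factor cannot divide $h_1$, via the normalization $\gcd(h_0,h_1)=1$, spells out a detail the paper leaves implicit, but otherwise the two arguments coincide.
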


\begin{proof}
Let $P$ be an irreducible factor of $h_1X^q-h_0$ of degree $D$.  Let us
consider $P^q$; by reducing modulo $h_1X^q-h_0$ and clearing
denominators, there exists a polynomial $A(X)$ such that
\begin{equation}\label{eq:freerel}
h_1^D P^q = h_1^D\tilde{P}\left(\frac{h_0}{h_1}\right)
+(h_1X^q-h_0)A(X).
\end{equation}
Since $P$ divides two of the terms of this equality, it must also
divide the third one, namely the polynomial $\cR =
h_1^D\tilde{P}\left(h_0/h_1\right)$. Let $v_P\ge 1$ be the valuation of
$P$ in $\cR$. In the finite field $\bF_{q^{2k}}$ we obtain the following
equalities between logarithms:
\begin{equation*}
	(q-v_P)\log P = -D\log h_1 +\sum_i e_i \log Q_i,
\end{equation*}
where $Q_i$ are the irreducible factors of $\cR$ other than $P$ and $e_i$
their valuation in~$\cR$. A polynomial $Q_i$ can not be problematic.
Otherwise, it would divide the right-hand side of
Equation~\eqref{eq:freerel}, and therefore, also the left-hand side, which
is impossible.
Since $v_P\leq \frac{\deg
\cR}{\deg P}\leq \delta <q$, the quantity $q-v_P$ is invertible modulo
$\ell$ (we assume, as usual that $\ell$ is larger than $q$) and we obtain
a relation between $\log P$, $\log h_1$ and the
logarithms of the non-problematic polynomials $Q_i$.
The degree of $\cR/P^{v_P}$ is at most $(\delta-1)D$, which gives the
claimed bound on the degrees of the $Q_i$.
\end{proof}

If $\delta\le 2$, this proposition solves the issues raised
by~\cite{traps13} about
problematic polynomials. Indeed, for each problematic polynomial of
degree $D>1$, it will be possible to rewrite its logarithm in terms of
logarithms of non-problematic polynomials of at most the same degree that
can be descended in the usual way. Similarly, each problematic
polynomial of degree 1 can have its logarithm rewritten in terms of the
logarithms of other non-problematic linear polynomials. Adding these
relations to the ones obtained in Section~\ref{sec:descent-one-step}, we
expect to have a full-rank linear system.

If $\delta>2$, we need to rely on the additional heuristic. Indeed, when
descending the $Q_i$ that have a degree potentially larger than the
degree of $D$, we could hit again the problematic polynomial we started
with, and it could be that the coefficients in front of $\log P$ in the
system vanishes. More generally, taking into account all the problematic
polynomials, if when we apply Proposition~\ref{solvetrap} to them we get
polynomials $Q_i$ of higher degrees, it could be that descending those we
creates loops so that the logarithms of some of the problematic
polynomials could not be computed. We expect this event to be very
unlikely. Since in all our experiments it was always possible to obtain
$\delta=2$, we did not investigate further.

\paragraph{Finding appropriate $h_0$ and $h_1$.}

One key fact about the algorithm is the existence of two polynomials
$h_0$ and $h_1$ in $\F_{q^2}[X]$ such that $h_1(X)X^q-h_0(X)$ has an
irreducible factor of degree $k$. A partial solution is due to
Joux~\cite{Joux13} who showed how to construct such polynomials when
$k\in\{q-1,q,q+1\}$. 
No such deterministic construction is known in the general
case, but experiments show that one can apparently choose $h_0$ and $h_1$ of
degree at most $2$. We performed an experiment for every odd prime
power $q$ in $[3,\ldots,1000]$ and every $k\leq q$ and found that we could
select $a\in\F_{q^2}$ such that $X^q+X^2+a$ has an irreducible factor
of degree $k$. Finally, note that the result is similar to a commonly made
heuristic in discrete logarithm algorithms: for fixed
$f\in\F_{q^2}[X,Y]$ and random $g\in\F_{q^2}[X,Y]$, the polynomial
$\text{Res}_Y(f,g)$ behaves as a random polynomial of same degree with
respect to the degrees of its irreducible factors.

\section{Some directions of improvement}
\label{sec:improvement}
The algorithm can be modified in several ways. On the one hand one can
obtain a better complexity if one proves
a stronger result on the smoothness probability. On the other
hand, without changing the complexity, one can obtain a version which
should behave better in practice. 

\subsection{Complexity improvement}
Heuristic~\ref{heu:fullrank} tells that a rectangular matrix with $\Theta(q)$
times more rows than columns has full rank. It seems reasonable to expect
that only a constant times more rows than columns would be enough to get
the full rank properties (as is suggested by the experiments proposed in
Section~\ref{sec:heur}). Then, it means that we expect to have a lot of
choices to select the best relations, in the sense that their left-hand
sides split into irreducible factors of degrees as small as possible.

On average, we expect to be able to try $\Theta(q)$ relations for each row
of the matrix. So, assuming that the numerators of $\cL_m$ behave like
random polynomials of similar degrees, we have to evaluate the
expected smoothness that we can hope for after trying $\Theta(q)$
polynomials of degree $(1+\delta)D$ over $\F_{q^2}$. Set $u=\log q /
\log\log q$, so that $u^u\approx q$. According to~\cite{FGP98} it is then possible to replace $\lceil
D/2\rceil$ in
Proposition~\ref{prop:onestep} by the value $O(D\log\log q/\log q)$.

Then, the discussion leading to Theorem~\ref{thm} can be changed to
take this faster descent into account. We keep the same estimate for the
arity of each node in the tree, but the depth is now only in $\log k /
\log\log q$. Since this depth ends up in the exponent, the resulting
complexity in Theorem~\ref{thm} is then
$$ \max(q, k)^{O(\log k / \log\log q)}.$$

\subsection{Practical improvements}
Because of the arity of the descent tree, the breadth eventually
exceeds the number of polynomials below some degree bound. It
makes no sense, therefore, to use the descent procedure beyond
this point, as the recovery of discrete logarithms of all these
polynomials is better achieved as a pre-computation. Note that this
corresponds to the computations of the $L(1/4+\epsilon)$ algorithm which starts by
pre-computing the logarithms of polynomials up to degree $2$.
In our case, we could in principle go up to degree $O(\log q)$ without
changing the complexity.
\medskip

We propose another practical improvement in the case where we would like
to spend more time descending a given polynomial $P$ in order to improve
the quality of the descent tree rooted at $P$. 
The set of polynomials appearing in the right-hand side of
Equation~\eqref{eq:Em} in Section~\ref{sec:descent-one-step} is
$\{P-\lambda\}$, because in the factorization of $X^q-X$, we
substitute $X$ with $m\cdot P$ for homographies~$m$. In fact, we
may apply $m$ to $(P:P_1)$ for any polynomial $P_1$ whose degree
does not exceed that of $P$. In the right-hand sides, we will have only
factors of form $P - \lambda P_1$ for $\lambda$ in $\F_{q^2}$.
On the left-hand sides, we have polynomials of the same degree as before,
so that the smoothness probability is expected to be the same.
Nevertheless, it is possible to test several $P_1$ polynomials, and to
select the one that leads to the best tree.

This strategy can also be useful in the following context (which will not
occur for large enough $q$):
it can
happen that for some triples $(q,D,D')$ one has $N_{q^2}(3D,D')/q^n\approx 1/q$. In
this case we have no certainty that we can descend a degree-$D$
polynomial to degree $D'$, but we can hope that at least one of the
$P_1$ allows to descend.

Finally, if one decides to use several auxiliary $P_1$ polynomials to descend
a polynomial $P$, it might be interesting to take a set of polynomials
$P_1$ with an arithmetic structure, so that the smoothness tests on the
left-hand sides can benefit from a sieving technique.

\section{Conclusion}
The algorithm presented in this article achieves a significant improvement
of the asymptotic complexity of discrete logarithm in finite fields, in
almost the whole range of parameters where the Function Field Sieve was
presently the most competitive algorithm. Compared to existing
approaches, and in particular to the line of recent
works~\cite{Jo13faster,GGMZ13}, the practical relevance of our algorithm
is not clear, and will be explored by further work. 

We note that the analysis of the algorithm presented here is heuristic, as discussed
in Section~\ref{sec:heur}. Some of the heuristics we stated,
related to the properties of matrices $H(P)$ extracted from the
matrix $\cH$, seem accessible to more solid justification. It seems
plausible to have the validity of algorithm rely on the sole heuristic of
the validity of the smoothness estimates.

The crossing point between the $L(1/4)$ algorithm and our quasi-polynomial
one is not determined yet. One of the key factors which hinders the practical
efficiency of this algorithm is the $O(q^2D)$ arity of the descent tree,
compared to the $O(q)$ arity achieved by techniques based on Gröbner
bases~\cite{Jo13faster} at the expense of a $L(1/4+\epsilon)$ complexity.
Adj et al.~\cite{AMOR13} proposed to mix the two algorithms
and deduced that the new descent technique must be used for cryptographic
sizes. Indeed, by estimating the time required to
compute discrete logarithms in $\F_{3^{6\cdot 509}}$, they showed the weakness
of some pairing-based cryptosystems. 

\ifanon
\else
\section*{Acknowledgements}
The authors would like to thank Daniel J. Bernstein for his comments on
an earlier version of this work, and for pointing out to us the possible
use of asymptotically fast linear algebra for solving the linear systems
encountered.

\fi

\bibliographystyle{splncs03}
\bibliography{article}

\end{document}